\newtheorem{theorem}{\it Theorem}
\newtheorem{corollary}{\it Corollary}
\newtheorem{proposition}{\it Proposition}
\title{\LARGE \bf
A Frequency-Domain Characterization of Optimal Error Covariance\\for the Kalman--Bucy Filter
}
\author{Song Fang, Hideaki Ishii, Jie Chen, and Karl Henrik Johansson
	\thanks{This research was supported by by the Knut and Alice Wallenberg Foundation, by the Swedish Strategic Research Foundation, by the Swedish Research Council, by JSPS under Grant-in-Aid for Scientific Research Grant
		No. 15H04020, by JST CREST under Grant No. JPMJCR15K3, by the Hong Kong RGC under the
		project CityU 111613, CityU 11260016, and by the City University of Hong Kong
		under Project 9380054.}
	\thanks{Song Fang and Karl Henrik Johansson are with the School of Electrical Engineering and Computer Science, Royal Institute of Technology, {\tt\small sonf@kth.se, kallej@kth.se}}%
	\thanks{Hideaki Ishii is with the Department of Computer Science, Tokyo Institute of Technology,  {\tt\small ishii@c.titech.ac.jp}}%
	\thanks{Jie Chen is with the Department of Electronic Engineering, City University of Hong Kong, {\tt\small jichen@cityu.edu.hk}}%
}
\begin{document}

\maketitle
\thispagestyle{empty}
\pagestyle{empty}

\begin{abstract}

In this paper, we discover that the trace of the division of the optimal output estimation error covariance over the noise covariance attained by the Kalman--Bucy filter can be explicitly expressed in terms of the plant dynamics and noise statistics in a frequency-domain integral characterization. Towards this end, we examine the algebraic Riccati equation associated with Kalman--Bucy filtering using analytic function theory and relate it to the Bode integral. Our approach features an alternative, frequency-domain framework for analyzing algebraic Riccati equations and reduces to various existing related results. 
\end{abstract}

\section{Introduction}

The Kalman--Bucy filter \cite{kalman1961new} is the optimal state estimator for linear continuous-time systems with white Gaussian noises. It is well known that the optimal state estimation error covariance attained by the Kalman--Bucy filter in steady state satisfies the algebraic Riccati equation \cite{anderson2012optimal, bittanti1991riccati, linearestimation}, which, however, does not provide an analytical expression of the error covariance. Over the years, various upper bounds and lower bounds have been obtained for the error covariance (see, e.g., \cite{kwon1996bounds} and the references therein). Meanwhile, nonrecursive algebraic solutions have been obtained as well (see, e.g., \cite{vaughan1969negative, bittanti1991riccati, crassidis2011optimal} and the references therein).

In this paper, we develop an alternative approach towards this problem from the viewpoint of the Bode integral \cite{Bod:45}. Bode integral was the first and arguably the best known result to analyze the fundamental limitations
of feedback control systems using analytic function theory \cite{francis1984h, Freu:85, boyd1985subharmonic, Mid:91, chen1995sensitivityc, chen2000logarithmic, stein2003respect}. In particular, the Bode integral is implied by the Poisson--Jensen formula \cite{goodwin2001control}. This line of research has been of continuing interest to the control community \cite{seron2012fundamental, Chen:14}, and was related to information theory in recent years as well \cite{zang2003nonlinear, Eli:04, Mar:07, Mar:08, Oka:08, Ish:09, Fang17TAC,Fang17Automatica, fang2017towards}; such results together with \cite{FangCDC17} inspired us to reexamine the Kalman--Bucy filter (in steady-state) using analytic function theory, e.g., the Poisson--Jensen formula, which will be seen to play an essential role in our analysis.




In particular, we first develop from the Poisson--Jensen formula a counterpart of Jensen's formula, based on which we obtain an analytical expression of the trace of the division of the optimal output estimation error covariance over the noise covariance by the Kalman--Bucy filter; this is given in terms of the plant dynamics, e.g., the unstable poles, and the noise statistics in a frequency-domain integral characterization. The analysis mainly concerns examining the algebraic Riccati equation associated with the Kalman--Bucy filter using analytic function theory. 

We compare our result with the existing related results in \cite{van1968detection, anderson1985use, braslavsky1999limiting, freudenberg2005control, zhang2012mmse}, and our integral characterization is seen to be the most generic and reduces to all of them. Specifically, the case when the plant is stable and the plant output is a scalar process was considered in \cite{van1968detection}. The authors of \cite{anderson1985use} then analyzed the case when the plant output is scalar but the plant is not necessarily stable. The following results in \cite{braslavsky1999limiting} considered the case when the process noise is relatively small in variance (and in the limit, zero) compared with the observation noise. Most recently, the case when the plant output is a vector process while the observation noise is with an identity covariance matrix was investigated in \cite{zhang2012mmse}. In addition, discussions on the dual problem in control can be found in \cite{anderson1985use, freudenberg2005control}.

The remainder of the paper is organized as follows. Section~II introduces the technical preliminaries. Section~III introduces a counterpart of Jensen's formula. In Section~IV, we examine the Kalman--Bucy filter using algebraic function theory and presents an integral characterization of the optimal output estimation error covariance. Relevant discussions and interpretations are also presented. Concluding remarks are given in Section~IV.


For analysis and discussions on the discrete-time Kalman filters, see our parallel work presented in \cite{FangACC18}.

\section{Preliminaries}

In this section, we introduce some relevant notions from stochastic processes, and briefly review the basic properties of the Kalman--Bucy filter. 

\subsection{Notations and Basic Concepts}

In this paper, we consider real-valued continuous zero-mean random variables and vectors, as well as continuous-time stochastic processes. We denote random variables and vectors using boldface letters. 
The logarithm $\ln$ is defined with base $\mathrm{e}$, and all functions are assumed to be measurable.  

A zero-mean stochastic process $\left\{ \mathbf{x} \left( t \right)\right\}, \mathbf{x} \left( t \right) \in \mathbb{R}^m$ is said to be (asymptotically) stationary \cite{Pap:02} if for any $t \geq 0$, the (asymptotic) correlation matrix 
\begin{flalign} 
&R_{\mathbf{x}}\left( \tau \right) =\lim_{t \to \infty} \mathrm{E}\left[ \mathbf{x} \left( t+\tau \right) \mathbf{x}^{T} \left( t \right) \right]\nonumber 
\end{flalign}  
exists. The (asymptotic) power spectrum of $\left\{ \mathbf{x} \left( t \right) \right\}, \mathbf{x} \left( t \right) \in \mathbb{R}^m$ is then defined as
\begin{flalign}
&\Phi_{\mathbf{x}}\left( \omega\right)
= \int_{-\infty}^{\infty} R_{\mathbf{x}} \left( \tau \right) \mathrm{e}^{-\mathrm{j}\omega \tau} \mathrm{d} \tau. \nonumber 
\end{flalign} 
It can be verified that $\Phi_{\mathbf{x}}\left( \omega\right)$ is positive semidefinite. In the scalar case, we denote $\Phi_{\mathbf{x}}\left( \omega\right)$ by $S_{\mathbf{x}}\left( \omega\right)$. For the inversion, 
\begin{flalign}
R_{\mathbf{x}} \left( \tau \right)
&= \lim_{t\to \infty} \mathrm{E}\left[ \mathbf{x} \left( t+\tau \right) \mathbf{x}^{T} \left( t \right) \right] \nonumber \\
&= \frac{1}{2\pi}\int_{-\infty}^{\infty} \Phi_{\mathbf{x}} \left( \omega\right) \mathrm{e}^{\mathrm{j}\omega \tau} \mathrm{d} \omega. \nonumber 
\end{flalign} 
Moreover, the (asymptotic) covariance matrix of $\left\{ \mathbf{x} \left( t \right)\right\}, \mathbf{x} \left( t \right) \in \mathbb{R}^m$ is given by
\begin{flalign}
&\Sigma_{\mathbf{x}}
= \lim_{t\to \infty} \mathrm{E}\left[ \mathbf{x} \left( t \right) \mathbf{x}^{T} \left( t \right) \right]
= R_{\mathbf{x}}\left( 0\right) 
= \frac{1}{2\pi}\int_{-\infty}^{\infty} \Phi_{\mathbf{x}}\left( \omega\right) \mathrm{d}  \omega. \nonumber 
\end{flalign}
In the scalar case, $\Sigma_{\mathbf{x}}$ reduces to (asymptotic) variance $\sigma_{\mathbf{x}}^2$. Moreover, if $\left\{ \mathbf{x} \left( t \right)\right\}, \mathbf{x} \left( t \right) \in \mathbb{R}^m$ is (asymptotically) white, then $\Phi_{\mathbf{x}}\left( \omega\right)$ is a constant $X$ for all $\omega$, and  
\begin{flalign}
&R_{\mathbf{x}} \left( \tau \right)
= \lim_{t\to \infty} \mathrm{E}\left[ \mathbf{x} \left( t+\tau \right) \mathbf{x}^{T} \left( t \right) \right]
= X \delta \left( \tau \right), \nonumber 
\end{flalign}
where $\delta \left( \cdot \right)$ is the Dirac delta function.

\subsection{The Kalman--Bucy Filter} \label{kalamnsection}

\begin{figure}
	\begin{center}
		\vspace{-3mm}
		\includegraphics [width=0.58\textwidth]{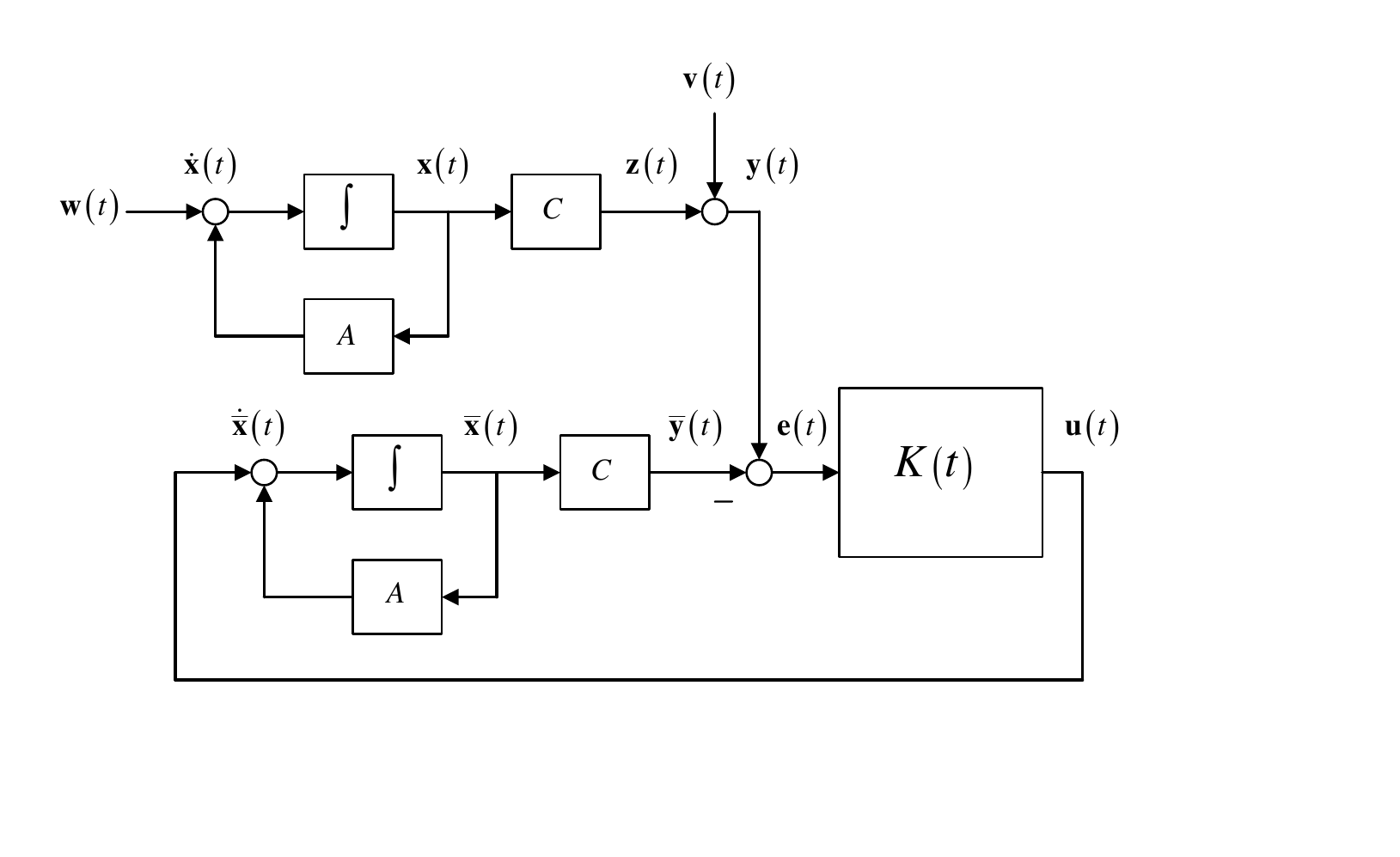}
		\vspace{-15mm}
		\caption{The Kalman--Bucy filter.}
		\label{Kalman}
	\end{center}
		\vspace{-6mm}
\end{figure}

Consider the continuous-time Kalman--Bucy filtering system \cite{linearestimation} depicted in Fig.~\ref{Kalman}, where the system is linear time-invariant (LTI) with state-space model given by
\begin{flalign}
&\left\{ \begin{array}{rcl}
\dot{\mathbf{x}} \left( t \right) & = & A \mathbf{x} \left( t \right) + \mathbf{w} \left( t \right),\\
\mathbf{y} \left( t \right) & = & C \mathbf{x} \left( t \right) + \mathbf{v} \left( t \right),
\end{array} \right. \nonumber 
\end{flalign}
where $\mathbf{x} \left( t \right) \in \mathbb{R}^{m}$ is the state to be estimated, $\mathbf{y} \left( t \right) \in \mathbb{R}^{l}$ is the system output, $\mathbf{w} \left( t \right) \in \mathbb{R}^{m}$ is the process noise, and $\mathbf{v} \left( t \right) \in \mathbb{R}^{l}$ is the measurement noise. The system matrices are $ {A} \in \mathbb{R}^{m \times m}$ and $ {C} \in \mathbb{R}^{l \times m}$, and $\left( A, C \right)$ is assumed to be detectable. Suppose that $\left\{ \mathbf{w} \left( t \right) \right\}$ and $\left\{ \mathbf{v} \left( t \right) \right\}$ are white Gaussian with covariance matrices 
\begin{flalign} 
&\mathrm{E} \left[ \mathbf{w} \left( t + \tau \right) \mathbf{w}^T \left( t \right) \right] = W \delta \left( \tau \right),~W \geq 0, \nonumber 
\end{flalign} 
and 
\begin{flalign} 
&\mathrm{E} \left[ \mathbf{v} \left( t + \tau \right) \mathbf{v}^T \left( t \right) \right] = V \delta \left( \tau \right),~V > 0, \nonumber 
\end{flalign} 
respectively, and that the initial state $\mathbf{x} \left( 0 \right) $ is Gaussian with covariance $\Sigma_{\mathbf{x} \left( 0 \right) }$ satisfying $0 < \det \Sigma_{\mathbf{x} \left( 0 \right) } < \infty$. Furthermore,  $\left\{ \mathbf{w} \left( t \right) \right\}$, $\left\{ \mathbf{v} \left( t \right) \right\}$, and $\mathbf{x} \left( 0 \right)$ are assumed to be mutually uncorrelated.  

The Kalman--Bucy filter is given by
\begin{flalign}
&\left\{ \begin{array}{rcl}
\dot{\overline{\mathbf{x}}} \left( t \right) & = & A \overline{\mathbf{x}} \left( t \right) + \mathbf{u} \left( t \right),\\
\overline{\mathbf{y}} \left( t \right) & = & C \overline{\mathbf{x}} \left( t \right), \\
\mathbf{e} \left( t \right) &=&\mathbf{y} \left( t \right) - \overline{\mathbf{y}} \left( t \right), \\
\mathbf{u} \left( t \right) &=& K \left( t \right) \mathbf{e} \left( t \right),
\end{array} \nonumber
\right. 
\end{flalign}
where $\overline{\mathbf{x}} \left( t \right) \in \mathbb{R}^m$, $\overline{\mathbf{y}} \left( t \right) \in \mathbb{R}^{l}$, $\mathbf{e} \left( t \right) \in \mathbb{R}^{l}$, and $\mathbf{\mathbf{u}} \left( t \right) \in \mathbb{R}^m$. Herein, $K \left( t \right) $ denotes the Kalman gain, which is given by
\begin{flalign} \label{gain}
& K \left( t \right) = P \left( t \right) C^T V^{-1}, 
\end{flalign}
where $P \left( t \right)$ denotes the state estimation error covariance matrix as
\begin{flalign} 
& P \left( t \right) = \mathrm{E} \left[ \left[ \mathbf{x} \left( t \right) - \overline{\mathbf{x}} \left( t \right)  \right] \left[\mathbf{x} \left( t \right)  -\overline{\mathbf{x}} \left( t \right)  \right]^T \right].\nonumber 
\end{flalign}
Herein, $P \left( t \right) $ is obtained using the Riccati equation
\begin{flalign} 
&\dot{P} \left( t \right) = AP \left( t \right) + P \left( t \right) A^{T} + W - P \left( t \right) C^{T}V^{{-1}} CP \left( t \right),  \nonumber
\end{flalign}
with $P \left( 0 \right) = \mathrm{E} \left[ \mathbf{x} \left( 0 \right) \mathbf{x}^T \left( 0 \right) \right].$

It is known that the Kalman--Bucy filtering system converges, i.e., the estimator is asymptotically stable, the state estimation error $\left\{ \mathbf{x} \left( t \right) - \overline{\mathbf{x}} \left( t \right) \right\}$ and the output estimation error $\left\{ \mathbf{e} \left( t \right) \right\}$ are asymptotically stationary, and $\left\{ \mathbf{e} \left( t \right) \right\}$ is asymptotically white, under the assumption that the system $\left( A, C \right)$ is detectable. Moreover, in steady state, the optimal state estimation error covariance matrix
\begin{flalign} 
&P =\lim_{t \to \infty} \mathrm{E} \left[ \left[ \mathbf{x} \left( t \right) - \overline{\mathbf{x}} \left( t \right)  \right] \left[\mathbf{x} \left( t \right)  -\overline{\mathbf{x}} \left( t \right)  \right]^T \right] \nonumber 
\end{flalign}
attained by the Kalman--Bucy filter satisfies the algebraic Riccati equation
\begin{flalign} \label{cARE}
& AP + PA^{T} + W - PC^{T}V^{{-1}} CP = 0, 
\end{flalign}
whereas the steady-state Kalman gain is given by 
\begin{flalign} \label{cAREK}
& K = P C^T V^{-1}.
\end{flalign} 
In addition, the optimal steady-state output estimation error covariance matrix is found to be
\begin{flalign} \label{cerrorcovariance}
\Sigma_{\mathbf{z} - \overline{\mathbf{y}}}
&= \lim_{t \to \infty} \mathrm{E}\left[ \left[ \mathbf{z} \left( t \right) - \overline{\mathbf{y}} \left( t \right) \right] \left[ \mathbf{z} \left( t \right) - \overline{\mathbf{y}} \left( t \right) \right]^{T} \right] \nonumber \\
&= C P C^{T},
\end{flalign}
where $\mathbf{z} \left( t \right) = C \mathbf{x} \left( t \right)$ denotes the true value of the system output (in comparison, $\mathbf{y} \left( t \right)$ denotes the measured value of the system output).
Moreover, when $l = 1 $, \eqref{cerrorcovariance} reduces to 
\begin{flalign}
&\sigma_{\mathbf{z} - \overline{\mathbf{y}}}^2
= \lim_{t\to \infty} \mathrm{E}\left[ \left[ \mathbf{z} \left( t \right) - \overline{\mathbf{y}} \left( t \right) \right]^2 \right]. \nonumber 
\end{flalign}
It is also worth mentioning that 
\begin{flalign}
& \lim_{t \to \infty} \mathrm{E}\left[ \mathbf{e} \left( t + \tau \right) \mathbf{e}^{T} \left( t \right) \right] 
= V \delta \left( \tau \right). \nonumber 
\end{flalign} 

\section{A Counterpart of Jensen's Formula}

In this section, we first examine the Poisson--Jensen formula (see, e.g., Appendix~C of \cite{goodwin2001control}) for a special class of analytic functions, and obtain a formula that will be essential for the rest of the paper.

\begin{proposition} \label{Jensen}
	Let \begin{flalign}
	& f \left( s \right) = \frac{p \left( s \right)}{q \left( s \right)} 
	\end{flalign} be a rational transfer function for which the numerator polynomial $p \left( s \right)$ and denominator polynomial $q \left( s \right)$ are both of order $m$. Suppose that all the poles of $f \left( s \right)$ are stable and that $f \left( s \right)$ has no zeros on the imaginary axis. In addition, suppose that 
	\begin{flalign}
	& \lim_{s \to \infty} f \left( s \right) = 1. 
	\end{flalign}
	Then, 
	\begin{multline} \label{cJensen}
	\frac{1}{2\pi}\int_{-\infty}^{\infty} \ln \left| f \left( \mathrm{j} \omega \right) \right| \mathrm{d} \omega \\
	=  \lim_{s \to \infty} \frac{1}{2} s \ln \left| f \left( s \right) \right| + \sum_{i=1}^{m} \max \left\{0, \Re \left[ \varphi_{i} \right] \right\},
	\end{multline}
	where $\varphi_{i}$ denote the zeros of $f \left( s \right)$.
\end{proposition}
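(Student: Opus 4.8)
The plan is to read off \eqref{cJensen} from the Poisson--Jensen formula on the open right half-plane, evaluated along the positive real axis in the limit $s\to\infty$.

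First I would record two elementary facts about $f$. Since $p$ and $q$ both have degree $m$ and $f(s)\to1$ at infinity, their leading coefficients coincide and $f$ admits the expansion $f(s)=1+c_1/s+O\!\left(s^{-2}\right)$ about $s=\infty$, with $c_1=\sum_{i=1}^m\psi_i-\sum_{i=1}^m\varphi_i$ where $\psi_i$ are the poles and $\varphi_i$ the zeros (for a real rational $f$, $c_1\in\mathbb R$). Hence $\ln f(s)=c_1/s+O\!\left(s^{-2}\right)$, so $\lim_{s\to\infty}s\ln\left|f\left(s\right)\right|=c_1$ along the real axis, while on the imaginary axis $\left|f\left(\mathrm j\omega\right)\right|^2=1+O\!\left(\omega^{-2}\right)$ and therefore $\ln\left|f\left(\mathrm j\omega\right)\right|=O\!\left(\omega^{-2}\right)$. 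Combined with the hypotheses that all poles are stable and $f$ has no imaginary-axis zeros --- so that $f$ is analytic, zero-free and bounded on $\mathrm j\mathbb R$ and $\ln\left|f\left(\mathrm j\omega\right)\right|$ is continuous --- this shows $\omega\mapsto\ln\left|f\left(\mathrm j\omega\right)\right|$ is integrable over $\mathbb R$. Establishing this integrability carefully is what will legitimize the limiting argument below.

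Next I would invoke the Poisson--Jensen formula on the right half-plane (see Appendix~C of \cite{goodwin2001control}). Because $f$ has no poles with nonnegative real part, and its zeros in the open right half-plane are precisely those $\varphi_i$ with $\Re\left[\varphi_i\right]>0$ (the remaining zeros lying strictly in the left half-plane), the formula at a point $s_0=\sigma_0+\mathrm j\omega_0$ with $\sigma_0>0$ and $f\left(s_0\right)\neq0$ takes the form
\begin{flalign}
\ln\left|f\left(s_0\right)\right|=\frac{\sigma_0}{\pi}\int_{-\infty}^{\infty}\frac{\ln\left|f\left(\mathrm j\omega\right)\right|}{\sigma_0^2+\left(\omega-\omega_0\right)^2}\mathrm d\omega+\sum_{i:\,\Re\left[\varphi_i\right]>0}\ln\left|\frac{s_0-\varphi_i}{s_0+\overline{\varphi_i}}\right|, \nonumber
\end{flalign}
the Blaschke-type terms accounting for the right half-plane zeros, with no pole terms appearing since $f$ is stable. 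I would then set $s_0=\sigma$ real, multiply by $\sigma$, and let $\sigma\to\infty$: the left-hand side tends to $\lim_{s\to\infty}s\ln\left|f\left(s\right)\right|$; since $\sigma^2/\left(\sigma^2+\omega^2\right)\to1$ pointwise and is bounded by $1$, dominated convergence (with the $\sigma$-independent dominating function $\left|\ln\left|f\left(\mathrm j\omega\right)\right|\right|$, integrable by the first step) gives $\tfrac{\sigma^2}{\pi}\int_{-\infty}^{\infty}\tfrac{\ln\left|f\left(\mathrm j\omega\right)\right|}{\sigma^2+\omega^2}\mathrm d\omega\to\tfrac{1}{\pi}\int_{-\infty}^{\infty}\ln\left|f\left(\mathrm j\omega\right)\right|\mathrm d\omega$; and each Blaschke-type term satisfies $\sigma\ln\left|\tfrac{\sigma-\varphi_i}{\sigma+\overline{\varphi_i}}\right|=\sigma\left(\ln\left|1-\varphi_i/\sigma\right|-\ln\left|1+\overline{\varphi_i}/\sigma\right|\right)\to-2\Re\left[\varphi_i\right]$. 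Using $\Re\left[\varphi_i\right]=\max\left\{0,\Re\left[\varphi_i\right]\right\}$ for the right half-plane zeros (and that left half-plane zeros contribute nothing), summing the three limits yields $\lim_{s\to\infty}s\ln\left|f\left(s\right)\right|=\tfrac{1}{\pi}\int_{-\infty}^{\infty}\ln\left|f\left(\mathrm j\omega\right)\right|\mathrm d\omega-2\sum_{i=1}^m\max\left\{0,\Re\left[\varphi_i\right]\right\}$, which is \eqref{cJensen} after dividing by $2$ and rearranging.

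The main obstacle is careful bookkeeping rather than any hard estimate: one must use the version of the Poisson--Jensen formula in which only the singularities \emph{strictly inside} the half-plane contribute, so that the stable poles drop out entirely; one must confirm that $\ln\left|f\left(\mathrm j\omega\right)\right|$ decays like $\omega^{-2}$ and not merely $\omega^{-1}$, so that the integral converges and dominated convergence is applicable; and one must track the sign of the Blaschke contribution consistently. A quick check on an all-pass example such as $f(s)=\left(s-1\right)/\left(s+1\right)$, for which both sides of \eqref{cJensen} equal $0$, is a good safeguard against sign errors.
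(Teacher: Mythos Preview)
Your proposal is correct and follows essentially the same route as the paper: apply the Poisson--Jensen formula for the right half-plane at $s_0=\sigma_0>0$, multiply by $\sigma_0$, and pass to the limit $\sigma_0\to\infty$, identifying the three limits (the integral via $\sigma_0^2/(\sigma_0^2+\omega^2)\to1$, the Blaschke terms via $\sigma_0\ln\left|\tfrac{\sigma_0-z_i}{\sigma_0+z_i^*}\right|\to-2\Re[z_i]$, and the boundary term via the $1+c_1/s+O(s^{-2})$ expansion). The only difference is presentational: you supply the integrability and dominated-convergence justification explicitly, whereas the paper cites \cite{doyle1990feedback} for those two limits and carries out only the $\lim_{\sigma_0\to\infty}\tfrac12\sigma_0\ln|f(\sigma_0)|$ computation in detail.
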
 

\begin{proof} See Appendix~A. \end{proof}

Note that in the discrete-time case (for the unit disk), Jensen's formula is a consequence of the more general Poisson--Jensen formula \cite{goodwin2001control}. Likewise in the continuous-time case (for the half plane), \eqref{cJensen} is also a consequence of the more general Poisson--Jensen formula \eqref{PJ}. As such, \eqref{cJensen} as well as the subsequent \eqref{cJensen2} may be viewed as the continuous-time (half-plane) counterparts of the discrete-time (unit-disk) Jensen's formula.

More generally, when not necessarily all the poles of $f \left( s \right)$ are stable, we can obtain the following result based on Proposition~\ref{Jensen}.

\begin{proposition} \label{Jensen2}
	Let \begin{flalign}
	& f \left( s \right) = \frac{p \left( s \right)}{q \left( s \right)} 
	\end{flalign} be a rational transfer function for which the numerator polynomial $p \left( s \right)$ and denominator polynomial $q \left( s \right)$ are both of order $m$. Suppose that $f \left( s \right)$ has no zeros on the imaginary axis and that 
	\begin{flalign}
	& \lim_{s \to \infty} f \left( s \right) = 1. 
	\end{flalign}
	Then, 
	\begin{flalign} \label{cJensen2}
	&\frac{1}{2\pi}\int_{-\infty}^{\infty} \ln \left| f \left( \mathrm{j} \omega \right) \right| \mathrm{d} \omega \nonumber \\
	&\ \ \ \  =  \lim_{s \to \infty} \frac{1}{2} s \ln \left| f \left( s \right) \right| + \sum_{i}^{} \max \left\{0, \Re \left[ \varphi_{i} \right] \right\}
	\nonumber \\
	&\ \ \ \ \ \ \ \  - \sum_{j}^{} \max \left\{0, \Re \left[ \eta_{j} \right] \right\}, 
	\end{flalign}
	where $\varphi_{i}$ denote the zeros of $f \left( s \right)$ and $\eta_{j}$ denote its poles.
\end{proposition}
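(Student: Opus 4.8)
The plan is to reduce Proposition~\ref{Jensen2} to Proposition~\ref{Jensen} by an all-pass (Blaschke-type) factorization that reflects the unstable poles of $f$ into the open left half-plane without changing $\left| f \left( \mathrm{j} \omega \right) \right|$ on the imaginary axis. Throughout, I assume (as is implicit here) that $f$ has no poles on the imaginary axis, so that its poles split into stable ones and unstable ones $\eta_{j}$ with $\Re \left[ \eta_{j} \right] > 0$, counted with multiplicity.

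First I would introduce
\[
\widetilde{f} \left( s \right) = f \left( s \right) \prod_{j} \frac{s - \eta_{j}}{s + \overline{\eta_{j}}},
\]
the product being over the unstable poles. Each factor $\left( s - \eta_{j} \right) / \left( s + \overline{\eta_{j}} \right)$ has numerator and denominator of the same degree, tends to $1$ as $s \to \infty$, cancels the pole of $f$ at $\eta_{j}$, and replaces it by a pole at $-\overline{\eta_{j}}$ in the open left half-plane. Since $f$ has real coefficients, the unstable poles occur in conjugate pairs, so $\widetilde{f} = \widetilde{p}/\widetilde{q}$ is again a real rational function with $\widetilde{p}$ and $\widetilde{q}$ of equal order; all of its poles are stable, $\lim_{s \to \infty} \widetilde{f} \left( s \right) = 1$, and---because the Blaschke factors contribute zeros only at the $\eta_{j}$, which are exactly cancelled---$\widetilde{f}$ has the same zeros $\varphi_{i}$ as $f$, in particular none on the imaginary axis. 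Hence Proposition~\ref{Jensen} applies to $\widetilde{f}$.

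Next I would exploit that every factor is all-pass: $\left| \mathrm{j} \omega - \eta_{j} \right| = \left| \mathrm{j} \omega + \overline{\eta_{j}} \right|$ for every real $\omega$, since both squared moduli equal $\left( \Re \left[ \eta_{j} \right] \right)^{2} + \left( \omega - \Im \left[ \eta_{j} \right] \right)^{2}$. Therefore $\ln \left| \widetilde{f} \left( \mathrm{j} \omega \right) \right| = \ln \left| f \left( \mathrm{j} \omega \right) \right|$, so the left-hand side of \eqref{cJensen} written for $\widetilde{f}$ equals that of \eqref{cJensen2} written for $f$; and the two right-hand-side zero sums agree since $\widetilde{f}$ and $f$ have the same zeros. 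It then remains only to relate the limit terms. Expanding $\ln \left| 1 + z \right| = \Re \left[ z \right] + O \left( \left| z \right|^{2} \right)$ as $s \to \infty$ along the positive real axis gives $\lim_{s \to \infty} \frac{1}{2} s \ln \left| \left( s - \eta_{j} \right) / \left( s + \overline{\eta_{j}} \right) \right| = - \Re \left[ \eta_{j} \right]$; summing over the reflected poles (and noting that stable poles contribute $0$ to $\sum_{j} \max \left\{ 0, \Re \left[ \eta_{j} \right] \right\}$) yields $\lim_{s \to \infty} \frac{1}{2} s \ln \left| \widetilde{f} \left( s \right) \right| = \lim_{s \to \infty} \frac{1}{2} s \ln \left| f \left( s \right) \right| - \sum_{j} \max \left\{ 0, \Re \left[ \eta_{j} \right] \right\}$. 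Substituting this, together with the equality of the integrals and of the zero sums, into \eqref{cJensen} applied to $\widetilde{f}$ produces exactly \eqref{cJensen2}.

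The step I expect to be the main obstacle is the pole--zero bookkeeping for $\widetilde{f}$: one must check that the zeros created by the Blaschke factors genuinely cancel the unstable poles (so that they do not spuriously enlarge the zero sum), track multiplicities correctly, and handle the possibility that a reflected pole $-\overline{\eta_{j}}$ coincides with a left-half-plane zero of $f$. The last case produces a further pole--zero cancellation that lowers the order of $\widetilde{f}$, but it is harmless: such a zero would contribute $0$ to the zero sum, and Proposition~\ref{Jensen} may simply be invoked at the reduced order. Once this bookkeeping is settled, the all-pass identity and the asymptotic evaluation of the Blaschke contribution are routine.
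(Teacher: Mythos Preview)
Your argument is correct, but it differs from the paper's. The paper introduces an arbitrary auxiliary polynomial $l(s)$ of order $m$ with all left-half-plane roots and no common roots with $p$ or $q$, writes $\ln|f| = \ln|p/l| - \ln|q/l|$, and applies Proposition~\ref{Jensen} separately to $p/l$ and $q/l$; the two $\lim_{s\to\infty}\tfrac{1}{2}s\ln|\cdot/l|$ contributions subtract to give $\lim_{s\to\infty}\tfrac{1}{2}s\ln|f|$, and the two zero sums are exactly $\sum_i \max\{0,\Re[\varphi_i]\}$ and $\sum_j \max\{0,\Re[\eta_j]\}$, with no further computation. Your Blaschke reflection also reduces to Proposition~\ref{Jensen}, but applied once to $\widetilde{f}$, at the cost of verifying the all-pass identity on the imaginary axis, evaluating $\lim_{s\to\infty}\tfrac{1}{2}s\ln\bigl|(s-\eta_j)/(s+\overline{\eta_j})\bigr| = -\Re[\eta_j]$, and the pole--zero bookkeeping you flag. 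The paper's route is slightly slicker because the auxiliary $l$ simply cancels and no asymptotics of the correcting factor are needed; your route is the classical all-pass factorization familiar from Bode-integral proofs and makes the role of the unstable poles more explicit. Both require the same implicit hypothesis that $f$ has no poles on the imaginary axis.
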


\begin{proof} See Appendix~B. \end{proof}

\section{Optimal Error Covariance by the Kalman--Bucy Filter}

We now study the optimal output estimation error covariance matrix $\Sigma_{\mathbf{z} - \overline{\mathbf{y}}}$ by the continuous-time Kalman--Bucy filter, which is determined by the plant dynamics and the noise statistics in an integral characterization.

\begin{theorem} \label{cgeneral}
	The optimal output estimation error covariance matrix $\Sigma_{\mathbf{z} - \overline{\mathbf{y}}}$ by the Kalman--Bucy filter satisfies
	\begin{flalign} \label{cgeneral1}
	\mathrm{tr} \left( \Sigma_{\mathbf{z} - \overline{\mathbf{y}}} V^{-1} \right) 
	&= \mathrm{tr} \left( CPC^T V^{-1} \right) \nonumber \\
	&=  \frac{1}{2\pi}\int_{-\infty}^{\infty} \ln \left[ \frac{\det \Phi_{\mathbf{y}} \left( \omega \right)}{\det V} \right] \mathrm{d} \omega \nonumber \\
	&\ \ \ \  + 2 \sum_{i=1}^{m} \max \left\{0, \Re \left[ \lambda_{i} \left( A \right)\right] \right\}, &
	\end{flalign}
		where
		\begin{flalign} \label{cnotation}
		\Phi_{\mathbf{y}} \left( \omega \right) =  C \left( \mathrm{j} \omega I - A \right)^{-1} W \left( - \mathrm{j} \omega I - A \right)^{-T} C^T  + V,
		\end{flalign}
	and $\lambda_{i} \left( A \right)$ denote the eigenvalues of $A$.
\end{theorem}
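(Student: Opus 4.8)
The plan is to reduce \eqref{cgeneral1} to a single application of Proposition~\ref{Jensen2}, the bridge being a spectral factorization of $\Phi_{\mathbf{y}}$ in terms of the steady-state Kalman gain $K = PC^{T}V^{-1}$.

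First I would establish the spectral factorization identity
\[
\Phi_{\mathbf{y}}\left(\omega\right) = \left[I + C\left(\mathrm{j}\omega I - A\right)^{-1}K\right] V \left[I + C\left(-\mathrm{j}\omega I - A\right)^{-1}K\right]^{T}.
\]
This follows by expanding the right-hand side, using $KV = PC^{T}$ and $KVK^{T} = PC^{T}V^{-1}CP$, the resolvent identity $\left(sI-A\right)^{-1}A = s\left(sI-A\right)^{-1} - I$ together with its analogue for $\left(-sI - A\right)^{-T}$, and finally the algebraic Riccati equation \eqref{cARE} in the form $W = PC^{T}V^{-1}CP - AP - PA^{T}$; the unbounded ($\propto s$) terms generated by the resolvent identities cancel, leaving exactly $C\left(\mathrm{j}\omega I - A\right)^{-1}W\left(-\mathrm{j}\omega I - A\right)^{-T}C^{T} + V$. (Equivalently, this is the spectral factorization induced by the whitening filter, whose transfer function from $\mathbf{y}$ to $\mathbf{e}$ is $[I + C(sI-A)^{-1}K]^{-1}$ and whose output $\mathbf{e}$ is white with covariance $V$.) Taking determinants and using that $A$, $C$, $K$ are real, so that $\det[I + C(-\mathrm{j}\omega I - A)^{-1}K]$ is the complex conjugate of $\det[I + C(\mathrm{j}\omega I - A)^{-1}K]$, gives $\det\Phi_{\mathbf{y}}(\omega)/\det V = |f(\mathrm{j}\omega)|^{2}$, where $f(s) := \det[I + C(sI-A)^{-1}K] = \det(sI - A + KC)/\det(sI-A)$; hence $\ln[\det\Phi_{\mathbf{y}}(\omega)/\det V] = 2\ln|f(\mathrm{j}\omega)|$.

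Next I would apply Proposition~\ref{Jensen2} to $f$. It is a ratio of two monic degree-$m$ polynomials, so $\lim_{s\to\infty}f(s) = 1$ and the order hypotheses hold; its poles are the eigenvalues of $A$, and its zeros are the eigenvalues of $A - KC$, which is Hurwitz since the Kalman--Bucy filter is asymptotically stable — so $f$ has no imaginary-axis zeros and $\sum_{i}\max\{0,\Re[\varphi_{i}]\} = 0$. Proposition~\ref{Jensen2} then yields $\frac{1}{2\pi}\int_{-\infty}^{\infty}\ln|f(\mathrm{j}\omega)|\,\mathrm{d}\omega = \lim_{s\to\infty}\frac{1}{2}s\ln|f(s)| - \sum_{i=1}^{m}\max\{0,\Re[\lambda_{i}(A)]\}$. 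Since $C(sI-A)^{-1}K = s^{-1}CK + O(s^{-2})$, we have $f(s) = 1 + s^{-1}\mathrm{tr}(CK) + O(s^{-2})$, so $\frac{1}{2}s\ln|f(s)| \to \frac{1}{2}\mathrm{tr}(CK) = \frac{1}{2}\mathrm{tr}(CPC^{T}V^{-1})$. Multiplying the resulting identity by $2$, substituting $2\ln|f(\mathrm{j}\omega)| = \ln[\det\Phi_{\mathbf{y}}(\omega)/\det V]$, and using $\Sigma_{\mathbf{z}-\overline{\mathbf{y}}} = CPC^{T}$ from \eqref{cerrorcovariance}, rearrangement gives \eqref{cgeneral1}.

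The main obstacle is the factorization identity in the first step: the algebra is elementary but delicate — one must commute the resolvents with $A$, invoke \eqref{cARE} at exactly the right moment so that the $\propto s$ terms cancel, and keep transposes versus complex conjugates straight. Everything afterwards is bookkeeping; the only side issues are convergence of the frequency integral and the legitimacy of the $s\to\infty$ limit, both immediate from $f(s) - 1 = O(1/s)$ (which also forces $\ln|f(\mathrm{j}\omega)| = O(1/\omega^{2})$), together with the tacit assumption that $A$ has no imaginary-axis eigenvalues so that $f$ itself has no imaginary-axis poles.
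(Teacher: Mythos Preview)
Your proposal is correct and follows essentially the same route as the paper: establish the spectral factorization $\Phi_{\mathbf{y}}(s) = [I + C(sI-A)^{-1}K]\,V\,[I + C(-sI-A)^{-1}K]^{T}$ from the algebraic Riccati equation, take determinants, and then invoke the Jensen-type formula to extract $\mathrm{tr}(CK)$ from the $s\to\infty$ asymptotics and the unstable eigenvalues of $A$ from the pole/zero structure. The only cosmetic difference is that the paper applies Proposition~\ref{Jensen} to $f(s)^{-1} = \det[I + C(sI-A)^{-1}K]^{-1}$ (which has stable poles, the eigenvalues of $A-KC$, and nonminimum-phase zeros at the unstable eigenvalues of $A$), whereas you apply Proposition~\ref{Jensen2} to $f$ itself and let the unstable eigenvalues of $A$ enter through the pole term; since $\ln|f^{-1}| = -\ln|f|$ and zeros and poles swap, the two computations are line-by-line duals.
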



\begin{proof} See Appendix~C. \end{proof} 

On the right-hand side of \eqref{cgeneral1}, the term 
\begin{flalign}
&\sum_{i=1}^{m} \max \left\{0, \Re \left[ \lambda_{i} \left( A \right)\right] \right\} 
\end{flalign}
quantifies the instability of the system \cite{qiu2010quantify, chen2013stabilization}. Moreover, when $A$ is stable, i.e., when all the eigenvalues $\lambda_{i} \left( A \right)$ of matrix $A$ satisfy $ \Re \left[ \lambda_{i} \left( A \right)\right] < 0$, $\left\{ \mathbf{y} \left( t \right) \right\}$ is stationary and thus
\begin{flalign}
&C \left( \mathrm{j} \omega I - A\right)^{-1} W \left(-\mathrm{j} \omega I - A\right)^{-T} C^T  + V 
\end{flalign}
denotes the power spectrum $\Phi_{\mathbf{y}} \left( \omega \right)$ of $\left\{ \mathbf{y} \left( t \right) \right\}$; in general, however, $A$ needs not to be stable, and hence the function
\begin{flalign} \label{cpopov}
&C \left( s I - A\right)^{-1} W \left( -s I - A\right)^{-T} C^T  + V 
\end{flalign}
is a Popov function (see, e.g., \cite{linearestimation}, for a detailed discussion). 

Note that an alternative proof of \eqref{cinvBode_dynamic} is given by \cite{mustafa1990minimum} as 
\begin{flalign} 
&\frac{1}{2\pi}\int_{-\infty}^{\infty} \ln \left| \det \left[ I + C \left( \mathrm{j} \omega I - A \right)^{-1} K \right]^{-1} \right|^2 \mathrm{d} \omega \nonumber \\
&\ \ \ \  =\frac{1}{2\pi}\int_{-\infty}^{\infty} \ln \left| \det \left[ I - C \left[ \mathrm{j} \omega I - \left(A - KC\right) \right] ^{-1} K \right] \right|^2 \mathrm{d} \omega \nonumber \\
&\ \ \ \  = - \mathrm{tr} \left( CK \right)+ 2 \sum_{i=1}^{m} \max \left\{0, \Re \left[ \lambda_{i} \left( A - KC + K C \right)\right] \right\} \nonumber \\
&\ \ \ \  = - \mathrm{tr} \left( C P C^{T} V^{-1} \right) + 2 \sum_{i=1}^{m} \max \left\{0, \Re \left[ \lambda_{i} \left( A \right)\right] \right\}. 
\end{flalign}
One may refer to Appendix~A of \cite{mustafa1990minimum} for further details on this. 

\subsection{Interpretation from the Viewpoint of Bode Integral} \label{BodeIntegral}

Equation \eqref{cinvBode_dynamic} in the proof of Theorem~\ref{cgeneral}, in particular,
\begin{flalign} \label{cinvBode_dynamic1}
&\frac{1}{2\pi}\int_{-\infty}^{\infty} \ln \left| \det \left[ I + C \left( \mathrm{j} \omega I - A \right)^{-1} K \right]^{-1} \right| \mathrm{d} \omega \nonumber \\
&\ \ \ \  =  - \frac{1}{2} \mathrm{tr} \left( C P C^{T} V^{-1} \right) + \sum_{i=1}^{m} \max \left\{0, \Re \left[ \lambda_{i} \left( A \right) \right] \right\}, 
\end{flalign}
can also be obtained as a consequence of the (continuous-time) Bode integral \cite{goodwin2001control}; it is interesting to discover that such a relation exists between these two top equations \cite{topequations}, namely algebraic Riccati equation and Bode integral, of the control field. Specifically, the estimator may be viewed as a feedback system, and thus the Bode integral can be obtained for its sensitivity from $\left\{ \mathbf{y} \left( t \right) \right\}$ to $\left\{ \mathbf{e} \left( t \right) \right\}$ (see Fig.~\ref{Kalman}). 
Indeed, if we denote 
\begin{flalign}
& L \left( s \right) = C \left( s I - A \right)^{-1} K, 
\end{flalign}
then \eqref{cinvBode_dynamic1} can be rewritten as 
\begin{multline} \label{cBode}
\frac{1}{2\pi}\int_{-\infty}^{\infty} \ln \left| \det \left[ I + L \left( \mathrm{j} \omega \right) \right]^{-1} \right| \mathrm{d} \omega \\
=   - \lim_{s \to \infty} \frac{1}{2} s \mathrm{tr} L \left( s \right) + \sum_{i=1}^{m} \max \left\{0, \Re \left[ \lambda_{i} \right] \right\},
\end{multline}
where $\lambda_{i} $ denote the poles of $L \left( s \right)$. Herein, we have used the fact that 
\begin{flalign}
& \sum_{i=1}^{m} \max \left\{0, \Re \left[ \lambda_{i} \right] \right\} = \sum_{i=1}^{m} \max \left\{0, \Re \left[ \lambda_{i} \left( A \right)\right] \right\}, 
\end{flalign}
which holds since the system is detectable (all the unstable modes of the system are observable) and thus the unstable poles of 
$L \left( s \right)$ correspond to the eigenvalues of $A$ with real parts larger than zero. 
It is also worth mentioning that when it is further assumed that $l=1$, \eqref{cBode} reduces to 
\begin{multline} \label{cBode2}
\frac{1}{2\pi}\int_{-\infty}^{\infty} \ln \left| \frac{1}{1 + L \left( \mathrm{j} \omega \right)} \right| \mathrm{d} \omega \\
=   - \lim_{s \to \infty} \frac{1}{2} s L \left( s \right) + \sum_{i=1}^{m} \max \left\{0, \Re \left[ \lambda_{i} \right] \right\}.
\end{multline}

On the other hand, it can be verified that the Bode integrals given in \eqref{cBode} and \eqref{cBode2} are implied by \eqref{cJensen} by letting $f \left( s \right) = \det \left[ I + L \left(  s  \right) \right]^{-1}$ and $f \left( s \right) = 1/ \left[ 1 + L \left(  s  \right) \right]$, respectively. In addition, note that one common instance for 
\begin{flalign}
& \lim_{s \to \infty} s \left\{ \ln \left| \det \left[ I + L \left(  s  \right) \right]^{-1} \right| \right\} = 0 
\end{flalign} 
to hold is that all the entries of $L \left(  s  \right)$ have at least two more poles than zeros, which is not satisfied in the case of Kalman--Bucy filtering as discussed in this paper.

\subsection{A More Explicit Expression of \eqref{cgeneral1}}

Using \eqref{cJensen2}, we can obtain a more explicit form for the formula in \eqref{cgeneral1}.

\begin{theorem} \label{alternative}
	Equality \eqref{cgeneral1} can be rewritten as
	\begin{flalign} \label{general1e}
	\mathrm{tr} \left( \Sigma_{\mathbf{z} - \overline{\mathbf{y}}} V^{-1} \right) 
	&= \mathrm{tr} \left( CPC^T V^{-1} \right) \nonumber \\
	& = \sum_{j}^{} \max \left\{0, \Re \left[ \varphi_{j} \right] \right\} - \sum_{k}^{} \max \left\{0, \Re \left[ \eta_{k}\right] \right\} \nonumber \\
	&\ \ \ \  + 2 \sum_{i=1}^{m} \max \left\{0, \Re \left[ \lambda_{i} \left( A \right)\right] \right\}, 
	\end{flalign}
	where $\varphi_{j}$ denote the zeros of $\det \left[ \Phi_{\mathbf{y}} \left( s \right)V^{-1} \right]$ and $\eta_{k}$ denote its poles. 
\end{theorem}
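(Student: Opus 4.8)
The plan is to apply Proposition~\ref{Jensen2} to the single scalar rational function
\[
f \left( s \right) = \det \left[ \Phi_{\mathbf{y}} \left( s \right) V^{-1} \right] = \det \left[ I + C \left( s I - A \right)^{-1} W \left( - s I - A \right)^{-T} C^{T} V^{-1} \right],
\]
with $\Phi_{\mathbf{y}} \left( s \right)$ as in \eqref{cnotation}, and then to substitute the resulting identity for $\frac{1}{2\pi}\int_{-\infty}^{\infty} \ln \left[ \det \Phi_{\mathbf{y}} \left( \omega \right) / \det V \right] \mathrm{d} \omega$ into \eqref{cgeneral1} of Theorem~\ref{cgeneral}. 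Since the zeros and poles of this $f \left( s \right)$ are by definition the $\varphi_{j}$ and $\eta_{k}$ appearing in the statement, once the substitution is carried out the claim \eqref{general1e} is immediate.

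First I would verify the hypotheses of Proposition~\ref{Jensen2} for $f \left( s \right)$. It is a ratio of two polynomials, and since $C \left( s I - A \right)^{-1} W \left( - s I - A \right)^{-T} C^{T} \to 0$ as $s \to \infty$ we get $\Phi_{\mathbf{y}} \left( s \right) \to V$, hence $\lim_{s \to \infty} f \left( s \right) = \det \left( V V^{-1} \right) = 1$; being finite and nonzero, this limit forces the numerator and denominator polynomials of $f$ to have equal degree, which is exactly what Proposition~\ref{Jensen2} requires (the integer playing the role of ``$m$'' there is this common degree, not the state dimension). On the imaginary axis, $\Phi_{\mathbf{y}} \left( \mathrm{j} \omega \right) = C \left( \mathrm{j} \omega I - A \right)^{-1} W \left( - \mathrm{j} \omega I - A \right)^{-T} C^{T} + V$ is Hermitian with $\Phi_{\mathbf{y}} \left( \mathrm{j} \omega \right) \succeq V \succ 0$, the first term being of the form $M W M^{*}$ with $W \geq 0$; hence $\det \Phi_{\mathbf{y}} \left( \omega \right) > 0$, so $f \left( s \right)$ has no zeros on the imaginary axis, and moreover $\ln \left| f \left( \mathrm{j} \omega \right) \right| = \ln \left[ \det \Phi_{\mathbf{y}} \left( \omega \right) / \det V \right]$.

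Next I would dispatch the boundary term in \eqref{cJensen2}. Expanding $\left( s I - A \right)^{-1} = s^{-1} I + s^{-2} A + O \left( s^{-3} \right)$ and $\left( - s I - A \right)^{-T} = - s^{-1} I + s^{-2} A^{T} + O \left( s^{-3} \right)$ shows $C \left( s I - A \right)^{-1} W \left( - s I - A \right)^{-T} C^{T} V^{-1} = O \left( s^{-2} \right)$, so $f \left( s \right) = 1 + O \left( s^{-2} \right)$, whence $\ln \left| f \left( s \right) \right| = O \left( s^{-2} \right)$ and $\lim_{s \to \infty} \frac{1}{2} s \ln \left| f \left( s \right) \right| = 0$. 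Feeding the last two facts into \eqref{cJensen2} gives
\[
\frac{1}{2\pi}\int_{-\infty}^{\infty} \ln \left[ \frac{\det \Phi_{\mathbf{y}} \left( \omega \right)}{\det V} \right] \mathrm{d} \omega = \sum_{j} \max \left\{ 0, \Re \left[ \varphi_{j} \right] \right\} - \sum_{k} \max \left\{ 0, \Re \left[ \eta_{k} \right] \right\},
\]
and substituting this into \eqref{cgeneral1} yields \eqref{general1e}.

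I expect the main obstacle to be making the application of Proposition~\ref{Jensen2} fully rigorous rather than the final substitution. The two points needing care are: (i) that forming the $l \times l$ determinant $\det \left[ \Phi_{\mathbf{y}} \left( s \right) V^{-1} \right]$ introduces no mismatch between numerator and denominator degree --- this is exactly what the biproperness $f \left( \infty \right) = 1$ above guarantees, since it survives taking the determinant --- and (ii) the possibility that $A$ has eigenvalues on the imaginary axis, which (by detectability such a mode is observable) may place \emph{poles} of $f \left( s \right)$ on the integration contour even though there are no zeros there. One would argue that near any such point $\ln \left[ \det \Phi_{\mathbf{y}} \left( \omega \right) / \det V \right]$ has only a logarithmic singularity and the integral still converges, so the Poisson--Jensen argument underlying Proposition~\ref{Jensen2} carries over by a standard contour-indentation/limiting argument; alternatively, perturb $A$ slightly into the open left half plane, apply Proposition~\ref{Jensen2}, and pass to the limit. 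Everything else is routine bookkeeping with the pole--zero counts.
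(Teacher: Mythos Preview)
Your proposal is correct and follows essentially the same route as the paper: apply Proposition~\ref{Jensen2} to $f(s)=\det[\Phi_{\mathbf{y}}(s)V^{-1}]$, argue that the boundary term $\lim_{s\to\infty}\tfrac{1}{2}s\ln|f(s)|$ vanishes, and substitute into \eqref{cgeneral1}. The only differences are cosmetic: the paper cites \cite{mustafa1990minimum} for the vanishing of the boundary term whereas you give the $O(s^{-2})$ expansion directly, and you are more explicit than the paper about checking the hypotheses of Proposition~\ref{Jensen2} (equal numerator/denominator degree, no imaginary-axis zeros) and about the imaginary-axis-pole edge case.
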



\begin{proof} By invoking \eqref{cJensen2}, it holds that
	\begin{flalign}
	&\frac{1}{2 \mathrm{\pi}}\int_{-\infty}^{\infty}{\ln \left[ \frac{\det \Phi_{\mathbf{y}} \left( \omega \right)}{\det V} \right] \mathrm{d}\omega } \nonumber \\
	&\ \ \ \  = \frac{1}{2 \mathrm{\pi}}\int_{-\infty}^{\infty}{\ln \det \left[ \Phi_{\mathbf{y}} \left( \omega \right) V^{-1} \right]\mathrm{d}\omega } \nonumber \\
	&\ \ \ \  = \lim_{s \to \infty} \frac{1}{2} s \left\{ \ln \det \left[ \Phi_{\mathbf{y}} \left( s \right) V^{-1} \right] \right\} + \sum_{j}^{} \max \left\{0, \Re \left[ \varphi_{j} \right] \right\} \nonumber \\
	&\ \ \ \ \ \ \ \  - \sum_{k}^{} \max \left\{0, \Re \left[ \eta_{k}\right] \right\}. \nonumber 
	\end{flalign}
	Since
	\begin{flalign}
	&\Phi_{\mathbf{y}} \left( s \right) V^{-1} 
	=  C \left( s I - A \right)^{-1} W \left( - s I - A \right)^{-T} C^T V^{-1} + I, \nonumber 
	\end{flalign}
	it follows from \cite{mustafa1990minimum} (Appendix~A) that
	\begin{flalign}
	&\lim_{s \to \infty} \frac{1}{2} s \left\{ \ln \det \left[ \Phi_{\mathbf{y}} \left( s \right) V^{-1} \right] \right\} = 0, \nonumber 
	\end{flalign}
	Noting also \eqref{cgeneral1}, \eqref{general1e} follows by simple substitution. 
\end{proof} 


Concerning the terms in \eqref{general1e}, it holds in general that
\begin{flalign}
& \sum_{k}^{} \max \left\{0, \Re \left[ \eta_{k}\right] \right\}
\leq 2 \sum_{i=1}^{m} \max \left\{0, \Re \left[ \lambda_{i} \left( A \right)\right] \right\}, 
\end{flalign}
since the unstable poles of $\det \left[ \Phi_{\mathbf{y}} \left( s \right)V^{-1} \right]$ must belong to the set of eigenvalues $A$ with real parts larger than zero, while not all such eigenvalues of $A$ may be present in the set of unstable poles of $ \det \left[ \Phi_{\mathbf{y}} \left( s \right)V^{-1} \right]$. 
Note that this is different from the case of $\det \left[ I + C \left(  s I - A \right)^{-1} K \right]$, as discussed in the proof of Theorem~\ref{cgeneral}; in such a case, since the system $\left( A, C \right)$ is detectable (all the unstable modes of the system are observable), the set of unstable poles of 
$\det \left[ I + C \left(  s I - A \right)^{-1} K \right]$ is exactly the same as the set of eigenvalues of $A$ with real parts larger than zero.

\subsection{Some Straightforward Corollaries}

Using the fact that 
\begin{flalign}
\underline{\lambda} \left( C^T V^{-1} C \right) \mathrm{tr} P
&\leq \mathrm{tr} \left( CPC^T V^{-1} \right) \nonumber \\
&= \mathrm{tr} \left( V^{-\frac{1}{2}} CPC^T V^{-\frac{1}{2}} \right) \nonumber \\
&\leq \overline{\lambda} \left( C^T V^{-1} C \right) \mathrm{tr} P, 
\end{flalign}
we could obtain lower and upper bounds on $\mathrm{tr} P$ based on \eqref{cgeneral1} as
\begin{flalign}
\mathrm{tr} P
&\geq \frac{1}{\overline{\lambda} \left( C^T V^{-1} C \right) } \Bigg\{ \frac{1}{2\pi}\int_{-\infty}^{\infty} \ln \left[ \frac{\det \Phi_{\mathbf{y}} \left( \omega \right) }{\det V} \right] \mathrm{d} \omega \nonumber\\ 
&\ \ \ \  + 2 \sum_{i=1}^{m} \max \left\{0, \Re \left[ \lambda_{i} \left( A \right)\right] \right\} \Bigg\}, 
\end{flalign}
and
\begin{flalign}
\mathrm{tr} P
&\leq \frac{1}{\underline{\lambda} \left( C^T V^{-1} C \right) } \Bigg\{ \frac{1}{2\pi}\int_{-\infty}^{\infty} \ln \left[ \frac{\det \Phi_{\mathbf{y}} \left( \omega \right) }{\det V} \right] \mathrm{d} \omega \nonumber\\ 
&\ \ \ \  + 2 \sum_{i=1}^{m} \max \left\{0, \Re \left[ \lambda_{i} \left( A \right)\right] \right\} \Bigg\}. 
\end{flalign}
One might also compare our bounds with those in, e.g., \cite{kwon1996bounds}, which, however, goes beyond the scope of this paper.

We now consider some special cases of Theorem~\ref{cgeneral}.

\begin{corollary} \label{cgeneral2}
	If $l=1$, and suppose that $\left\{ \mathbf{w} \left( t \right) \right\}$ and $\left\{ \mathbf{v} \left( t \right) \right\}$ are white Gaussian with covariance matrix $W$ and variance $\sigma_{\mathbf{v}}^2$, respectively, then
	\begin{flalign} \label{cgeneral23}
	&\sigma_{\mathbf{z} - \overline{\mathbf{y}}}^2
	=CPC^T  \nonumber \\
	&= \frac{\sigma_{\mathbf{v}}^2}{2\pi}\int_{-\infty}^{\infty} \ln \left[ \frac{S_{\mathbf{y}} \left( \omega \right)}{\sigma_{\mathbf{v}}^2} \right] \mathrm{d} \omega  
	+ 2 \sigma_{\mathbf{v}}^2 \sum_{i=1}^{m} \max \left\{0, \Re \left[ \lambda_{i} \left( A \right)\right] \right\}, 
	\end{flalign}
	where
	\begin{flalign}
	& S_{\mathbf{y}} \left( \omega \right) =  C \left( \mathrm{j} \omega I - A \right)^{-1} W \left( - \mathrm{j} \omega I - A \right)^{-T} C^T  + \sigma_{\mathbf{v}}^2. 
	\end{flalign}
\end{corollary}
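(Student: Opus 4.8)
The plan is to derive the corollary as the scalar-output case ($l=1$) of Theorem~\ref{cgeneral}, so that essentially all that needs to be done is to track how each matrix-valued object in \eqref{cgeneral1} degenerates to a scalar. First I would record the elementary reductions that occur when $l=1$: the measurement noise covariance $V$ is the scalar $\sigma_{\mathbf{v}}^2$; the output estimation error covariance $\Sigma_{\mathbf{z}-\overline{\mathbf{y}}} = CPC^T$ is the scalar $\sigma_{\mathbf{z}-\overline{\mathbf{y}}}^2$; and the spectral matrix $\Phi_{\mathbf{y}}(\omega)$ of \eqref{cnotation} becomes the scalar $S_{\mathbf{y}}(\omega) = C(\mathrm{j}\omega I - A)^{-1}W(-\mathrm{j}\omega I - A)^{-T}C^T + \sigma_{\mathbf{v}}^2$. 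In particular the hypotheses imposed in the corollary (detectability of $(A,C)$, $W\ge 0$, $\sigma_{\mathbf{v}}^2>0$, and the whiteness/Gaussianity assumptions) are precisely those of Theorem~\ref{cgeneral} with $l=1$, so the theorem applies.

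Next, using that the trace and determinant of a $1\times 1$ matrix both equal the entry itself, I would rewrite the left-hand side of \eqref{cgeneral1} as $\mathrm{tr}(\Sigma_{\mathbf{z}-\overline{\mathbf{y}}}V^{-1}) = \sigma_{\mathbf{z}-\overline{\mathbf{y}}}^2/\sigma_{\mathbf{v}}^2 = CPC^T/\sigma_{\mathbf{v}}^2$, and the integrand $\ln[\det\Phi_{\mathbf{y}}(\omega)/\det V]$ as $\ln[S_{\mathbf{y}}(\omega)/\sigma_{\mathbf{v}}^2]$; the eigenvalue term $\sum_{i=1}^m \max\{0,\Re[\lambda_i(A)]\}$ is untouched since $A$ remains $m\times m$. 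Multiplying the resulting identity through by $\sigma_{\mathbf{v}}^2>0$ (positive because $V>0$) then yields \eqref{cgeneral23} directly.

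I do not anticipate a genuine obstacle here, since no new analysis is needed beyond what Theorem~\ref{cgeneral} already supplies; the only point demanding care is the bookkeeping of the scalar specialization, namely confirming that $\det V = \sigma_{\mathbf{v}}^2$ and $\det\Phi_{\mathbf{y}}(\omega)=S_{\mathbf{y}}(\omega)$ so that no spurious logarithms of determinants survive. If desired, one can additionally remark that this scalar formula recovers the classical characterizations of \cite{van1968detection, anderson1985use} when further restricted to a stable plant, but that comparison is not needed for the proof itself.
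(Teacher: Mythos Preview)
Your proposal is correct and matches the paper's approach: the corollary is presented there as a straightforward special case of Theorem~\ref{cgeneral} with no separate proof, and your reduction---replacing $V$, $\Phi_{\mathbf{y}}$, and $\Sigma_{\mathbf{z}-\overline{\mathbf{y}}}$ by their scalar counterparts and multiplying through by $\sigma_{\mathbf{v}}^2$---is exactly the intended specialization.
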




\begin{corollary} 
	If $l=m=1$, and suppose that $\left\{ \mathbf{w} \left( t \right) \right\}$ and $\left\{ \mathbf{v} \left( t \right) \right\}$ are white Gaussian with variances $\sigma_{\mathbf{w}}^2$ and $\sigma_{\mathbf{v}}^2$, respectively, then
	\begin{flalign}
	P
	&= \frac{1}{C^2} \Bigg\{\frac{\sigma_{\mathbf{v}}^2}{2\pi}\int_{-\infty}^{\infty} \ln \left[ C^2 \left|  \frac{1}{\mathrm{j} \omega - A}  \right|^2 \frac{\sigma_{\mathbf{w}}^2}{\sigma_{\mathbf{v}}^2} + 1 \right] \mathrm{d} \omega \nonumber \\
	&\ \ \ \  + 2 \sigma_{\mathbf{v}}^2 \max \left\{0, \Re \left[ A \right] \right\}\Bigg\}. 
	\end{flalign}
\end{corollary}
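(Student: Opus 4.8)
The plan is to specialize Corollary~\ref{cgeneral2} to the scalar-state case $m=1$ and carry out the resulting one-dimensional reductions explicitly. First I would observe that when $m=1$ the system matrix $A$ is a scalar, $C$ is a scalar, and $W = \sigma_{\mathbf{w}}^2$. Under these substitutions the power spectrum appearing in Corollary~\ref{cgeneral2} becomes
\begin{flalign}
& S_{\mathbf{y}} \left( \omega \right) = C \left( \mathrm{j} \omega - A \right)^{-1} \sigma_{\mathbf{w}}^2 \left( - \mathrm{j} \omega - A \right)^{-1} C + \sigma_{\mathbf{v}}^2 = C^2 \left| \frac{1}{\mathrm{j} \omega - A} \right|^2 \sigma_{\mathbf{w}}^2 + \sigma_{\mathbf{v}}^2, \nonumber
\end{flalign}
where I have used $\left( \mathrm{j}\omega - A \right)^{-1}\left( -\mathrm{j}\omega - A \right)^{-1} = \left| \mathrm{j}\omega - A \right|^{-2}$ since $A$ is real. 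Dividing through by $\sigma_{\mathbf{v}}^2$ inside the logarithm then yields exactly the integrand $\ln\left[ C^2 \left| 1/(\mathrm{j}\omega - A) \right|^2 \sigma_{\mathbf{w}}^2/\sigma_{\mathbf{v}}^2 + 1 \right]$ that appears in the statement.

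Next I would handle the left-hand side. From Corollary~\ref{cgeneral2} with $m=1$ we have $\sigma_{\mathbf{z}-\overline{\mathbf{y}}}^2 = C P C^T = C^2 P$, since $P$ is now a scalar. Solving for $P$ gives $P = (C^2)^{-1}\, \sigma_{\mathbf{z}-\overline{\mathbf{y}}}^2$, which produces the prefactor $1/C^2$ in front of the braces. The instability sum $\sum_{i=1}^m \max\{0, \Re[\lambda_i(A)]\}$ collapses to the single term $\max\{0, \Re[A]\}$ because $A$ has the one eigenvalue $A$ itself; multiplying the whole bracket by $\sigma_{\mathbf{v}}^2$ (carried over from Corollary~\ref{cgeneral2}) and then by $1/C^2$ gives the term $2\sigma_{\mathbf{v}}^2 \max\{0,\Re[A]\}/C^2$. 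Assembling these pieces reproduces \eqref{cgeneral23} specialized to $m=1$, which is precisely the claimed identity.

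This argument is essentially a direct substitution, so there is no serious obstacle; the only point requiring a line of care is the simplification $\left(\mathrm{j}\omega I - A\right)^{-1} W \left(-\mathrm{j}\omega I - A\right)^{-T} C^T \to C^2 \sigma_{\mathbf{w}}^2 \left|\mathrm{j}\omega - A\right|^{-2}$, which relies on $A$, $C$, $W$ all being scalars (with $A$ real) so that the transpose/conjugate structure becomes an ordinary modulus squared. One should also note that the detectability hypothesis of the original setup is automatically satisfied here precisely when $C \neq 0$, so that the division by $C^2$ is legitimate; if $\Re[A] \ge 0$ and $C = 0$ the filtering problem is not detectable, consistent with the formula breaking down. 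With these remarks in place the corollary follows immediately from Corollary~\ref{cgeneral2}.
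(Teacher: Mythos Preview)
Your proposal is correct and matches the paper's approach: the corollary appears in the ``Some Straightforward Corollaries'' subsection without an explicit proof, and is obtained exactly as you describe, by specializing Corollary~\ref{cgeneral2} to $m=1$ and dividing through by $C^2$. Your additional remarks on the modulus-squared simplification and on detectability requiring $C\neq 0$ are accurate and fill in the only details worth noting.
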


\subsection{Relation to Existing Results}

We now present a list of comparisons with existing results in the related works. It will be seen that our result in \eqref{cgeneral1} is the most generic and reduces to all the listed ones, although they have adopted different approaches.

\begin{itemize}
	\item When $l=1$ and $A$ is stable, $\left\{ \mathbf{\mathbf{z}} \left( t \right) \right\}$ is stationary and 
	\begin{flalign}
	& C \left( \mathrm{j} \omega I - A \right)^{-1} W \left(  - \mathrm{j} \omega I - A \right)^{-T} C^T 
	\end{flalign} 
	denotes its power spectrum $S_{\mathbf{z}} \left( \omega \right)$. Hence, \eqref{cgeneral23} reduces to
	\begin{flalign} \label{cKS}
	&\sigma_{\mathbf{z} - \overline{\mathbf{y}}}^2 = CPC^T
	= \frac{\sigma_{\mathbf{v}}^2}{2\pi}\int_{-\infty}^{\infty} \ln \left[ \frac{S_{\mathbf{z}} \left( \omega \right) + \sigma_{\mathbf{v}}^2}{\sigma_{\mathbf{v}}^2} \right] \mathrm{d} \omega. 
	\end{flalign}
	This coincides with the result in \cite{van1968detection}.
	\item When $l=1$, $\sigma_{\mathbf{v}}^2=1$, and $W=BB^{T}$, where $B \in \mathbb{R}^{m \times 1}$, \eqref{cgeneral23} reduces to
	\begin{flalign}
	\sigma_{\mathbf{z} - \overline{\mathbf{y}}}^2 
	&= CPC^T  \nonumber \\
	&= \frac{1}{2\pi}\int_{-\infty}^{\infty} \ln \left[ \left|C \left( \mathrm{j} \omega I - A \right)^{-1} B\right|^2  + 1 \right] \mathrm{d} \omega  \nonumber \\
	&\ \ \ \  + 2 \sum_{i=1}^{m} \max \left\{0, \Re \left[ \lambda_{i} \left( A \right)\right] \right\}. 
	\end{flalign}
	This coincides with the conclusion in \cite{anderson1985use}.
	\item When $l=m=1$, $\sigma_{\mathbf{v}}^2 = 1$, and $\sigma_{\mathbf{w}}^2 = \varepsilon^2$, \eqref{cgeneral23} reduces to
	\begin{flalign}
	\sigma_{\mathbf{z} - \overline{\mathbf{y}}}^2 
	&= CPC^T  \nonumber \\
	&= \frac{1}{2\pi}\int_{-\infty}^{\infty} \ln \left[ \varepsilon^2 C^2 \left|  \frac{1}{\mathrm{j} \omega - A}  \right|^2  + 1 \right] \mathrm{d} \omega  \nonumber \\
	&\ \ \ \  + 2 \sum_{i=1}^{m} \max \left\{0, \Re \left[ A \right] \right\}. 
	\end{flalign}
	This coincides with the corresponding result in \cite{braslavsky1999limiting}.
	\item When $l=1$, $\sigma_{\mathbf{v}}^2 = 1$, and $W =0$, \eqref{cgeneral23} reduces to
	\begin{flalign}
	& \sigma_{\mathbf{z} - \overline{\mathbf{y}}}^2 
	= CPC^T
	= 2 \sum_{i=1}^{m} \max \left\{0, \Re \left[ \lambda_{i} \left( A \right)\right] \right\}. 
	\end{flalign}
	This also coincides with conclusion in \cite{braslavsky1999limiting}.
	\item When $V = I$, \eqref{cgeneral1} reduces to
	\begin{flalign}
	\mathrm{tr}	\Sigma_{\mathbf{z} - \overline{\mathbf{y}}} 
	&= \mathrm{tr} \left( CPC^T \right) \nonumber \\
	&=  \frac{1}{2\pi}\int_{-\infty}^{\infty} \ln \det \Phi_{\mathbf{y}} \left( \omega \right) \mathrm{d} \omega \nonumber \\
	&\ \ \ \  + 2 \sum_{i=1}^{m} \max \left\{0, \Re \left[ \lambda_{i} \left( A \right)\right] \right\}, 
	\end{flalign}
	where
	\begin{flalign} 
	& \Phi_{\mathbf{y}} \left( \omega \right) =  C \left( \mathrm{j} \omega I - A \right)^{-1} W \left( - \mathrm{j} \omega I - A \right)^{-T} C^T  + I. 
	\end{flalign}
	This coincides with the result in \cite{zhang2012mmse}.
\end{itemize}

Discussions on the dual problem in control can be found in, e.g., \cite{anderson1985use, freudenberg2005control}.

Indeed, equation~\eqref{cKS} is referred to as the Yovits--Jackson formula in \cite{kailath1974view}.
In a broad sense, the Yovits--Jackson formula can be viewed as
the continuous-time counterpart of the Kolmogorov--Szeg\"o formula \cite{ linearestimation, Pap:02, lindquist2015linear, vaidyanathan2007theory, chen2018role} in the discrete-time case. For the multiple-input multiple-output (MIMO) case, when $\left\{ \mathbf{y} \left( t \right) \right\}$ is stationary ($A$ is stable), \eqref{cgeneral1} reduces to
\begin{flalign}
\mathrm{tr} \left( \Sigma_{\mathbf{z} - \overline{\mathbf{y}}} V^{-1} \right) 
&= \mathrm{tr} \left( CPC^T V^{-1} \right) \nonumber \\
& =  \frac{1}{2\pi}\int_{-\infty}^{\infty} \ln \left[ \frac{\det \Phi_{\mathbf{y}} \left( \omega \right)}{\det V} \right] \mathrm{d} \omega \nonumber \\
&=  \frac{1}{2\pi}\int_{-\infty}^{\infty} \ln \left\{\frac{\det \left[ \Phi_{\mathbf{z}} \left( \omega \right) + V \right]}{\det V} \right\}  \mathrm{d} \omega,
\end{flalign}
which may be viewed as the as the continuous-time counterpart of the Wiener--Masani formula \cite{wiener1957prediction, lindquist2015linear, chen2018role}.  

As we pointed out in \cite{FangACC18}, our formula therein generalizes the Kolmogorov--Szeg\"o formula and the Wiener--Masani formula to the non-stationary case, and correspondingly in this paper, we have generalized the Yovits--Jackson formula to the MIMO case as well as the non-stationary case.

\section{Conclusion}

%

In this paper, we have shown that the trace of the division of the optimal output estimation error covariance over the noise covariance attained by the Kalman--Bucy filter can be explicitly expressed in terms of the plant dynamics and noise statistics in a frequency-domain integral characterization. We have also discussed the relation of our integral characterization to the Bode integral. Possible future research directions include analysis of other classes of algebraic Riccati equations. 

\section*{Appendix}

\subsection{Proof of Proposition~\ref{Jensen}} \label{Jensenproof}

It follows from the Poisson--Jensen formula for the half plane \cite{goodwin2001control} that for every point $s_{0} = \sigma_{0} + \mathrm{j} \omega_{0}$, $\sigma_{0} >0$, 
\begin{multline} \label{PJ}
\frac{1}{\pi}\int_{-\infty}^{\infty} \frac{\sigma_{0}}{\sigma_{0}^2 + \left( \omega - \omega_{0} \right)^2} \ln \left| f \left( \mathrm{j} \omega \right) \right| \mathrm{d} \omega \\ = \ln \left| f \left( s_{0} \right) \right| - \sum_{i}^{} \ln \left| \frac{s_{0} - z_{i}}{s_{0} + z_{i}^{*}} \right|, 
\end{multline}
where $z_{i}$ denote the nonminimum-phase zeros of $f \left( s \right)$ and $z_{i}^{*}$ denote the complex conjugate of $z_{i}$. Hence, taking $\omega_{0} = 0$, we have
\begin{multline}
\frac{1}{\pi}\int_{-\infty}^{\infty} \frac{\sigma_{0}}{\sigma_{0}^2 + \omega^2} \ln \left| f \left( \mathrm{j} \omega \right) \right| \mathrm{d} \omega \\ = \ln \left| f \left( \sigma_{0} \right) \right| - \sum_{i}^{} \ln \left| \frac{\sigma_{0} - z_{i}}{\sigma_{0} + z_{i}^{*}} \right|.
\end{multline}
As such,
\begin{flalign} \label{doyle}
&\lim_{\sigma_{0} \to \infty} \sigma_{0} \left[ \frac{1}{2 \pi}\int_{-\infty}^{\infty} \frac{\sigma_{0}}{\sigma_{0}^2 + \omega^2}  \ln \left| f \left( \mathrm{j} \omega \right) \right| \mathrm{d} \omega \right] \nonumber \\
&\ \ \ \  = \lim_{\sigma_{0} \to \infty} \left[ \frac{1}{2} \sigma_{0} \ln \left| f \left( \sigma_{0} \right) \right| \right] \nonumber \\
&\ \ \ \ \ \ \ \  - \lim_{\sigma_{0}\to \infty} \left(  \frac{1}{2} \sigma_{0} \sum_{i}^{} \ln \left| \frac{\sigma_{0} - z_{i}}{\sigma_{0} + z_{i}^{*}} \right| \right). 
\end{flalign}
It is known \cite{doyle1990feedback} that
\begin{flalign}
&\lim_{\sigma_{0} \to \infty} \sigma_{0} \left[ \frac{1}{2 \pi}\int_{-\infty}^{\infty} \frac{\sigma_{0}}{\sigma_{0}^2 + \omega^2} \ln \left| f \left( \mathrm{j} \omega \right) \right|  \mathrm{d} \omega \right] \nonumber \\
&\ \ \ \  =\lim_{\sigma_{0} \to \infty} \left[ \frac{1}{2 \pi}\int_{-\infty}^{\infty} \frac{\sigma_{0}^2}{\sigma_{0}^2 + \omega^2}  \ln \left| f \left( \mathrm{j} \omega \right) \right| \mathrm{d} \omega \right] \nonumber \\
&\ \ \ \  = \frac{1}{2 \pi}\int_{-\infty}^{\infty} \ln \left| f \left( \mathrm{j} \omega \right) \right| \mathrm{d} \omega, \nonumber 
\end{flalign}
and
\begin{flalign}
- \lim_{\sigma_{0}\to \infty} \left( \frac{1}{2} \sigma_{0} \sum_{i}^{} \ln \left| \frac{\sigma_{0} - z_{i}}{\sigma_{0} + z_{i}^{*}} \right| \right) 
&= \sum_{i}^{} \Re \left[ z_{i} \right] \nonumber \\
&= \sum_{i=1}^{m} \max \left\{0, \Re \left[ \varphi_{i} \right] \right\}. \nonumber 
\end{flalign}
We finally consider the remaining term in \eqref{doyle}:
\begin{flalign} \label{doyle1}
&\lim_{\sigma_{0} \to \infty} \left[ \frac{1}{2} \sigma_{0} \ln \left| f \left( \sigma_{0} \right) \right| \right]. 
\end{flalign}
Denote 
$ p \left( s \right) = p_m s^m + p_{m-1} s^{m-1} + \cdots + p_1 s + p_0$, and 
$ q \left( s \right) = q_m s^m + q_{m-1} s^{m-1} + \cdots + q_1 s + q_0$.
Since $\lim_{s \to \infty} f \left( s \right) = 1$, we have $p_m = q_m$. In addition, as $s \to \infty$, we have
\begin{flalign}
f \left( s \right) 
&= \frac{p \left( s \right)}{q \left( s \right)} = \frac{p_m s^m + p_{m-1} s^{m-1} + \cdots + p_1 s + p_0}{q_m s^m + q_{m-1} s^{m-1} + \cdots + q_1 s + q_0} \nonumber \\
&= \frac{1 + \frac{p_{m-1}}{p_m} \frac{1}{s} + \frac{p_{m-2}}{p_m} \frac{1}{s}+ \cdots + \frac{p_{1}}{p_m} \frac{1}{s^{m-1}} + \frac{p_{0} }{p_m} \frac{1}{s^{m}}}{1 + \frac{q_{m-1}}{q_m} \frac{1}{s} + \frac{q_{m-2}}{q_m} \frac{1}{s} + \cdots + \frac{q_{1}}{q_m} \frac{1}{s^{m-1}} + \frac{q_{0} }{q_m} \frac{1}{s^{m}}} \nonumber \\
&= 1 +  \left( \frac{p_{m-1}}{p_m} - \frac{q_{m-1}}{q_m} \right) \frac{1}{s} + O \left(  \frac{1}{s^2} \right). \nonumber 
\end{flalign}
Hence, 
\begin{flalign}
&\lim_{\sigma_{0} \to \infty} \left[ \frac{1}{2} \sigma_{0} \ln \left| f \left( \sigma_{0} \right) \right| \right] \nonumber \\
&= \lim_{\sigma_{0} \to \infty} \left\{ \frac{1}{2} \sigma_{0} \ln \left[ 1 +  \left( \frac{p_{m-1}}{p_m} - \frac{q_{m-1}}{q_m} \right) \frac{1}{\sigma_{0}} + O \left(  \frac{1}{\sigma_{0}^2} \right)\right] \right\} \nonumber \\
&= \lim_{\sigma_{0} \to \infty} \left\{ \frac{1}{2} \left[ \left( \frac{p_{m-1}}{p_m} - \frac{q_{m-1}}{q_m} \right) + O \left(  \frac{1}{\sigma_{0}} \right)\right] \right\} \nonumber \\
&= \frac{1}{2} \left( \frac{p_{m-1}}{p_m} - \frac{q_{m-1}}{q_m} \right). \nonumber 
\end{flalign}
It thus follows that the limit 
\eqref{doyle1}
always exists; it is always finite, although it can be $0$ when $p_{m-1} = q_{m-1}$. This completes the proof.

\subsection{Proof of Proposition~~\ref{Jensen2}} \label{Jensen2proof}

Rewrite
\begin{flalign}
& f \left( s \right) = \frac{p \left( s \right)}{q \left( s \right)} = \frac{p \left( s \right)}{l \left( s \right)} \frac{l \left( s \right)}{q \left( s \right)}, \nonumber 
\end{flalign}
where $l \left( s \right)$ is chosen as a polynomial of $s$ satisfying the following conditions:
\begin{itemize}
	\item $l \left( s \right)$ is with the same order as those of $p \left( s \right)$ and $q \left( s \right)$;
	\item all the zeros of $l \left( s \right)$ are nonminimum phase;
	\item there is no pole-zero cancellation between $l \left( s \right)$ and $p \left( s \right)$ or between $l \left( s \right)$ and $q \left( s \right)$; and
	\item the following relation holds:
	\begin{flalign}
	&\lim_{s \to \infty} \frac{p \left( s \right)}{l \left( s \right)} = \lim_{s \to \infty} \frac{l \left( s \right)}{q \left( s \right)} = 1. \nonumber 
	\end{flalign}
\end{itemize}
Therefore, the zeros of $f \left( s \right)$ are given by the zeros of $p \left( s \right)/l \left( s \right)$, while the poles of $f \left( s \right)$ are given by the zeros of $q \left( s \right)/l \left( s \right)$. On the other hand, the poles of both $p \left( s \right)/l \left( s \right)$ and $q \left( s \right)/l \left( s \right)$ are stable. Then, using Proposition~\ref{Jensen}, we arrive at
\begin{flalign}
&\frac{1}{2\pi}\int_{-\infty}^{\infty} \ln \left| f \left( \mathrm{j} \omega \right) \right| \mathrm{d} \omega \nonumber \\
&\ \ \ \  = \frac{1}{2\pi}\int_{-\infty}^{\infty} \ln \left| \frac{p \left( \mathrm{j} \omega \right)}{l \left( \mathrm{j} \omega \right)}  \right| \mathrm{d} \omega -  \frac{1}{2\pi}\int_{-\infty}^{\infty} \ln \left| \frac{q \left( \mathrm{j} \omega \right)}{l \left( \mathrm{j} \omega \right)}  \right| \mathrm{d} \omega \nonumber \\
&\ \ \ \  =  \lim_{s \to \infty} \frac{1}{2} s \ln \left| \frac{p \left( s \right)}{l \left( s \right)} \right| + \sum_{i}^{} \max \left\{0, \Re \left[ \varphi_{i} \right] \right\}
\nonumber \\
&\ \ \ \ \ \ \ \  - \lim_{s \to \infty} \frac{1}{2} s \ln \left| \frac{q \left( s \right)}{l \left( s \right)} \right|  - \sum_{j}^{} \max \left\{0, \Re \left[ \eta_{j} \right] \right\}
\nonumber \\
&\ \ \ \  =  \lim_{s \to \infty} \frac{1}{2} s \ln \left| \frac{p \left( s \right)}{q \left( s \right)}  \right| + \sum_{i}^{} \max \left\{0, \Re \left[ \varphi_{i} \right] \right\}
\nonumber \\
&\ \ \ \ \ \ \ \  - \sum_{j}^{} \max \left\{0, \Re \left[ \eta_{j} \right] \right\} \nonumber \\
&\ \ \ \  =  \lim_{s \to \infty} \frac{1}{2} s \ln \left| f \left( s \right)  \right| + \sum_{i}^{} \max \left\{0, \Re \left[ \varphi_{i} \right] \right\}
\nonumber \\
&\ \ \ \ \ \ \ \  - \sum_{j}^{} \max \left\{0, \Re \left[ \eta_{j} \right] \right\}. \nonumber
\end{flalign}
This completes the proof.

\subsection{Proof of Theorem~\ref{cgeneral}} \label{cgeneralproof}

For all $s$, it holds that
\begin{flalign}
& A P + P A^T 
= \left( A - sI \right)P + P \left( A + sI \right)^T. \nonumber 
\end{flalign}
Hence, \eqref{cARE} can be rewritten as
\begin{flalign} 
& \left( A - sI \right)P + P \left( A + sI \right)^T + W - PC^{T}V^{{-1}} CP = 0, \nonumber 
\end{flalign}
or equivalently,
\begin{flalign} 
& \left( sI - A \right)P + P \left( - sI - A \right)^T = W - PC^{T}V^{{-1}} CP. \nonumber 
\end{flalign}
Thus,
\begin{flalign}
& C \left( s I - A \right)^{-1} \left( sI - A \right)P \left( -s I - A \right)^{-T} C^T \nonumber \\
&\ \ \ \  + C \left( s I - A \right)^{-1} P \left( - sI - A \right)^T \left( -s I - A \right)^{-T} C^T  \nonumber \\
&\ \ \ \  = C \left( s I - A \right)^{-1} W \left( -s I - A \right)^{-T} C^T \nonumber \\
&\ \ \ \ \ \ \ \  - C \left( s I - A \right)^{-1} P C^T V^{-1} C P \left( -s I - A \right)^{-T} C^T, \nonumber 
\end{flalign}
that is,
\begin{flalign}
& C P \left( -s I - A \right)^{-T} C^T + C \left( s I - A \right)^{-1} P  C^T  \nonumber \\
&\ \ \ \  = C \left( s I - A \right)^{-1} W \left( -s I - A \right)^{-T} C^T \nonumber \\
&\ \ \ \ \ \ \ \  - C \left( s I - A \right)^{-1} P C^T V^{-1} C P \left( -s I - A \right)^{-T} C^T. \nonumber 
\end{flalign}
In addition, noting \eqref{cAREK}, we have
\begin{flalign}
& C \left( s I - A \right)^{-1} P  C^T + C P \left( -s I - A \right)^{-T} C^T \nonumber \\
&\ \ \ \  + C \left( s I - A \right)^{-1} P C^T V^{-1} C P \left( -s I - A \right)^{-T} C^T  \nonumber \\
&= C \left( s I - A \right)^{-1} P  C^T V^{-1} V + V V^{-1} C P \left( -s I - A \right)^{-T} C^T \nonumber \\
&\ \ \ \  + C \left( s I - A \right)^{-1} P C^T V^{-1} V V^{-1} C P \left( -s I - A \right)^{-T} C^T  \nonumber \\
&= C \left( s I - A \right)^{-1} K V + V K^{T} \left( -s I - A \right)^{-T} C^T \nonumber \\
&\ \ \ \  + C \left( s I - A \right)^{-1} K V K^T \left( -s I - A \right)^{-T} C^T  \nonumber \\
&= C \left( s I - A \right)^{-1} W \left( -s I - A \right)^{-T} C^T. \nonumber  
\end{flalign}
Thus,
\begin{flalign}
& V + C \left( s I - A \right)^{-1} K V + V K^{T} \left( -s I - A \right)^{-T} C^T \nonumber \\
&\ \ \ \  + C \left( s I - A \right)^{-1} K V K^T \left( -s I - A \right)^{-T} C^T  \nonumber \\
&\ \ \ \  = C \left( s I - A \right)^{-1} W \left( -s I - A \right)^{-T} C^T + V, \nonumber 
\end{flalign}
and
\begin{flalign} \label{cfactorization}
& \left[ I + C \left( sI - A \right)^{-1} K \right] V \left[ I + C \left( -s I - A \right)^{-1} K \right]^T \nonumber \\
&\ \ \ \  = C \left( s I - A \right)^{-1} W \left( -s I - A \right)^{-T} C^T + V. 
\end{flalign}
As such,
\begin{flalign}
& \det \left\{ \left[ I + C \left( sI - A \right)^{-1} K \right] V \left[ I + C \left( -s I - A \right)^{-1} K \right]^T \right\} \nonumber \\
&\ \ \ \  = \left| \det  \left[ I + C \left( sI - A \right)^{-1} K \right] \right|^2 \det V \nonumber \\
&\ \ \ \  = \det \left[ C \left( s I - A \right)^{-1} W \left( -s I - A \right)^{-T} C^T + V \right], \nonumber 
\end{flalign}
and 
\begin{flalign}
& \ln \left| \det  \left[ I + C \left( sI - A \right)^{-1} K \right] \right|^2 + \ln \det V \nonumber \\
&\ \ \ \  = \ln \det \left[ C \left( s I - A \right)^{-1} W \left( -s I - A \right)^{-T} C^T + V \right]. \nonumber 
\end{flalign}
Hence,
\begin{flalign}
&\frac{1}{2\pi}\int_{- \infty}^{\infty} \ln \left| \det \left[ I + C \left( \mathrm{j} \omega I - A \right)^{-1} K \right] \right|^2 \mathrm{d} \omega \nonumber \\
&\ \ \ \  + \frac{1}{2\pi}\int_{- \infty}^{\infty} \ln \det V \mathrm{d} \omega \nonumber \\
&\ \ \ \  = \frac{1}{2\pi}\int_{- \infty}^{\infty} \ln \det \Phi_{\mathbf{y}} \left( \omega \right) \mathrm{d} \omega,  \nonumber
\end{flalign}
where $\Phi_{\mathbf{y}} \left( \omega \right)$ is given by \eqref{cnotation}. Consequently, 
\begin{flalign}
&- \frac{1}{2\pi}\int_{- \infty}^{\infty} \ln \left| \det \left[ I + C \left( \mathrm{j} \omega I - A \right)^{-1} K \right]^{-1} \right|^2 \mathrm{d} \omega \nonumber \\
&\ \ \ \  = \frac{1}{2\pi}\int_{- \infty}^{\infty} \ln \left[ \frac{\det \Phi_{\mathbf{y}} \left( \omega \right)}{\det V} \right] \mathrm{d} \omega. \nonumber 
\end{flalign}
Note that
\begin{flalign}
\left[ I + C \left(  s I - A \right)^{-1} K \right]^{-1} =  I - C \left[ s I - \left(A - KC\right) \right] ^{-1} K. \nonumber &
\end{flalign}
Hence, the poles of 
\begin{flalign} \label{detL}
&\det \left[ I + C \left(  s I - A \right)^{-1} K \right]^{-1} 
\end{flalign} 
are indeed the eigenvalues of $A - KC$; since the Kalman--Bucy filter is asymptotically stable, all the poles of \eqref{detL} are stable. On the other hand, since the system is detectable, all the unstable modes of the system are observable, and thus the nonminimum-phase zeros of \eqref{detL} (the unstable poles of 
$\det \left[ I + C \left( s I - A \right)^{-1} K \right]$) correspond to the eigenvalues of $A$ with real parts larger than zero. As such, by letting 
\begin{flalign}
&f \left( s \right) = \det \left[ I + C \left(  s I - A \right)^{-1} K \right]^{-1}, \nonumber 
\end{flalign}
it follows from \eqref{cJensen} that
\begin{flalign} \label{cinvBode_dynamic}
&\frac{1}{2\pi}\int_{-\infty}^{\infty} \ln \left| \det \left[ I + C \left( \mathrm{j} \omega I - A \right)^{-1} K \right]^{-1} \right| \mathrm{d} \omega \nonumber \\
&\ \ \ \  =  \lim_{s \to \infty} \frac{1}{2} s \left\{ \ln \left| \det \left[ I + C \left( s I - A \right)^{-1} K \right]^{-1} \right| \right\} \nonumber \\
&\ \ \ \ \ \ \ \  + \sum_{i=1}^{m} \max \left\{0, \Re \left[ \varphi_{i} \right] \right\} \nonumber \\
&\ \ \ \  = - \frac{1}{2} \mathrm{tr} \left( C P C^{T} V^{-1} \right) + \sum_{i=1}^{m} \max \left\{0, \Re \left[ \lambda_{i} \left( A \right)\right] \right\}, 
\end{flalign}
where $\varphi_{i}$ denote the zeros of $\det \left[ I + C \left( s I - A \right)^{-1} K \right]^{-1}$. Herein, we have also used the fact that (see, e.g., Appendix~A of \cite{mustafa1990minimum})
\begin{flalign} 
&\lim_{s \to \infty} \frac{1}{2} s \left\{ \ln \left| \det \left[ I + C \left( s I - A \right)^{-1} K \right]^{-1} \right| \right\} \nonumber \\
&\ \ \ \  =- \lim_{s \to \infty} \frac{1}{2} s \left\{ \mathrm{tr} \left[ C \left( s I - A \right)^{-1} K \right] \right\} \nonumber \\
&\ \ \ \  =- \frac{1}{2} \mathrm{tr} \left( C K \right) \nonumber \\
&\ \ \ \  = - \frac{1}{2} \mathrm{tr} \left( C P C^{T} V^{-1} \right). \nonumber 
\end{flalign}
As such,
\begin{flalign}
&\mathrm{tr} \left( \Sigma_{\mathbf{z} - \overline{\mathbf{y}}} \right)
= \mathrm{tr} \left( CPC^T V^{-1} \right) \nonumber \\
&\ \ \ \ = - \frac{1}{2\pi}\int_{-\infty}^{\infty} \ln \left| \det \left[ I + C \left( \mathrm{j} \omega I - A \right)^{-1} K \right]^{-1} \right|^2 \mathrm{d} \omega \nonumber \\
&\ \ \ \ \ \ \ \  + 2 \sum_{i=1}^{m} \max \left\{0, \Re \left[ \lambda_{i} \left( A \right)\right] \right\} \nonumber \\
&\ \ \ \ =  \frac{1}{2\pi}\int_{-\infty}^{\infty} \ln \left[ \frac{\det \Phi_{\mathbf{y}} \left( \omega \right) }{\det V} \right] \mathrm{d} \omega \nonumber \\
&\ \ \ \ \ \ \ \  + 2 \sum_{i=1}^{m} \max \left\{0, \Re \left[ \lambda_{i} \left( A \right)\right] \right\}. \nonumber 
\end{flalign}
This completes the proof.

\bibliographystyle{IEEEtran}
\bibliography{uctest}

\end{document}